\theoremstyle{plain}
\newtheorem{theorem}{Theorem}[section]
\newtheorem{definition}[theorem]{Definition}
\newtheorem{lemma}[theorem]{Lemma}
\theoremstyle{remark}
\newtheorem{remark}[theorem]{Remark}
\def\C{{\mathbb C}}
\def\R{{\mathbb R}}
\def\({\left(}
\def\){\right)}
\def\<{\left\langle}
\def\>{\right\rangle}
\def\1{{\mathbf 1}}
\def\eps{\varepsilon}
\DeclareMathOperator{\RE}{Re}
\DeclareMathOperator{\supp}{supp}
\numberwithin{equation}{section}
\date\today
\keywords{Point interaction, bounded domains, Robin boundary conditions, zero-energy resonance, Birman-Schwinger operator}
\subjclass[2020]{47A55, 47A58, 47B25, 81Q10, 81Q80}
\title[Point interactions on bounded domains]{Approximation of Schr\"odinger operators with point interactions on bounded domains}
\author[Diego Noja]{Diego Noja}
\address{Dipartimento di Matematica e Applicazioni, Universit\`a di Milano Bicocca, Via~R.~Cozzi 55, 20126 Milano.}
\email{diego.noja@unimib.it}
\author[Raffaele Scandone]{Raffaele Scandone}
\address{Universit\`a degli Studi di Napoli ``Federico II'', Dipartimento di Matematica e Applicazioni ``R.~Caccioppoli'', Complesso Monte S.~Angelo - Via Cintia, 80126 Napoli.}
\email{raffaele.scandone@unina.it}
\begin{document}
\begin{abstract}
We consider Schr\"odinger operators on a bounded domain $\Omega\subset \R^3$, with homogeneous Robin or Dirichlet boundary conditions on $\partial\Omega$ and a point (zero-range) interaction placed at an interior point of $\Omega$. We show that, under suitable spectral assumptions, and by means of an extension-restriction procedure which exploits the already known result on the entire space, the singular interaction is approximated by rescaled sequences of regular potentials. The result is missing in the literature, and we also take the opportunity to point out some general issues in the approximation of point interactions and the role of zero energy resonances.
\end{abstract}

\maketitle

\section{Introduction and main result}
In this paper we consider a class of Schr\"odinger operators with point interactions on bounded domains, and we study their approximation through regular potentials.
 Let $\Omega$ be an open, connected, bounded domain in $\R^3$, with boundary $\partial\Omega$ of class $\mathcal{C}^2$. We consider the Laplace operator $-\Delta_{\sigma}$ on $L^2(\Omega)$, with the homogeneous Robin boundary condition
 \begin{equation}\label{robin}
\sigma(\psi)(x):=\partial_{n}\psi(x)+b(x)\psi(x)=0,\qquad  x\in\partial\Omega,
\end{equation} 
or the homogeneous Dirichlet boundary condition
 \begin{equation}\label{D}
\psi(x)=0,\qquad  x\in\partial\Omega,
\end{equation} 
formally corresponding to the case $b(x)=+\infty$.
In \eqref{robin} $\partial_n$ denotes the outward normal derivative with respect to $\partial\Omega$, and $b$ is a real valued, $\mathcal{C}^1$-function on $\partial\Omega$. The case $b=0$ yields the Neumann boundary condition. 

In addition, we perturb $-\Delta_{\sigma}$ through a delta-like interaction at a point $x_0\in\Omega$, which formally amounts to introduce in the operator domain singular elements of the form 
\begin{equation}\label{eq:BPcontact}
	g(x)\;=\;\mathrm{const}\cdot\big(\,|x-x_0|^{-1}+4\pi\alpha\big)+o(1)\qquad\mathrm{as}\;\;x\to x_0\,,
\end{equation}
with a certain fixed $\alpha\in\R$. The relation between the singular part and the regular part of the domain element displayed in the previous formula is a boundary condition at the singularity characterizing point interactions in three dimensions.
In the context of Quantum Mechanics, \eqref{eq:BPcontact} is the short-range asymptotic approximation of the low-energy bound state of a potential ideally supported at $x_0$ and with $s$-wave scattering length $-(4\pi\alpha)^{-1}$, see e.g.~the discussion in the seminal paper by Bethe and Peierls \cite{Bethe_Peierls-1935}, and the treatise \cite{albeverio-solvable} for a rigorous analysis. From now on we suppose, without loss of generality, $x_0=0\in\Omega$.

Given $\alpha\in\R$, we denote by $-\Delta_{\alpha,\sigma}$, the self-adjoint operator on $L^2(\Omega)$ acting as the (suitably re-normalized) Laplacian on functions satisfying \eqref{robin} and the singular boundary condition \eqref{eq:BPcontact}. The case $\alpha=+\infty$ corresponds to the unperturbed Laplace operator $-\Delta_{\sigma}$. We refer to Section \ref{se2} for the rigorous definition and the main properties of $-\Delta_{\alpha,\sigma}$.

Our goal is to realize $-\Delta_{\alpha,\sigma}$ as a suitable limit of regular, re-scaled Schr\"odinger operators on $L^2(\Omega)$. The analogous approximation problem on $\R^n$, $n=1,2,3,$ (the dimensions where a non-trivial point interaction does actually exist) is well-understood, see e.g.~\cite{albeverio-solvable,BHS2013,BHL2014,Mic-Scan,slk,GY,DMV24} and references therein. Periodic and anti-periodic boundary conditions have been considered in \cite{fassari,holdenAIHP}. The case of a 3d Riemannian manifold without boundary has been investigated in \cite{GIP}, but when the ambient space is a bounded domain and in the presence of general boundary conditions there are no explicit results in the literature.

Let us introduce our framework. Let us fix a real valued, compactly supported potential $V\in L^{\frac32}(\R^3)$ (the critical integrability with respect to the Laplace operator in 3d, as the scaling $V(x)\mapsto \tau^2V(\tau x)$, $\tau\in\R$, leaves the $L^{3/2}(\R^3)$ norm  invariant), and a parameter $\lambda\in\R$. For every $\eps>0$, we set
\begin{equation}\label{eq:Veps}
	V_{\eps}(x):=\frac{1+\lambda\eps}{\eps^2}V\left(\frac{x}{\eps}\right).
	\end{equation}
Under these assumptions, $-\Delta_{\sigma}+V_{\eps}$ defines a self-adjoint operator on $L^2(\Omega)$. We aim to study the limit of $-\Delta_{\sigma}+V_{\eps}$ as $\eps\to 0$ (our analysis also applies if the term $1+\lambda\varepsilon$ in \eqref{eq:Veps} is replaced by an analytic function $f=f(\varepsilon)$ with $f(0)=1$ and $f'(0)=\lambda$). We assume that there exists an open ball $B$ such that $\supp(V)\subseteq \overline{B}\subseteq\Omega$ (this is not restrictive, since $\Omega$ is open and for $\eps\to 0$ the support of $V_{\eps}$ shrinks to the origin). The case when the potential $V$ is not compactly supported (and then $V_{\varepsilon}$ does not vanish on the boundary for any $\varepsilon>0$) is discussed in the recent preprint \cite{LoPo}. We are going to show that, under suitable spectral assumptions, the presence of the re-scaled potential $V_{\eps}$ induces in the limit as $\eps\to 0$ a non-trivial point interaction.
The precise statement and proof needs the introduction of the Birman-Schwinger operator associated to $V$. 

To this aim, let us set \begin{equation}\label{uv}
u:=\operatorname{sgn}(V)|V|^{1/2}\ ,\ \ \ \ \ v:=|V|^{1/2},
\end{equation} and define
\begin{equation}\label{defB}
	B_0(f)(x):=\frac{u(x)}{4\pi}\int_{\Omega}\frac{v(y)f(y)}{|x-y|}dy.
	\end{equation}
A direct application of the Hardy-Littlewood-Sobolev and H\"older inequalities shows that $B_0$ is an Hilbert-Schmidt (whence compact) operator on $L^2(\Omega)$.\\
Our main result is the following.
\begin{theorem}\label{th:main}
Suppose that $-1$ is a simple eigenvalue for $B_0$, and let $\varphi$ be the associated normalized eigenfunction. Assume moreover that $\langle v,\varphi\rangle\neq 0$, and set $\alpha:=-\lambda|\langle v,\varphi\rangle|^{-2}\neq +\infty$. Then $-\Delta_{\sigma}+V_{\eps}$ converges to $-\Delta_{\alpha,\sigma}$ as $\eps\to 0$, in the norm resolvent sense in $L^2(\Omega)$.
\end{theorem}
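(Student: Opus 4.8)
The plan is to reduce the problem on the bounded domain $\Omega$ to the already-understood problem on $\R^3$ via an extension-restriction argument, and to control everything through the resolvent and its Birman-Schwinger factorization. First I would fix $z$ in a common resolvent set (e.g. $z=-\mu^2$ with $\mu$ large) and write, using the second resolvent identity,
\begin{equation*}
(-\Delta_\sigma+V_\eps-z)^{-1}=(-\Delta_\sigma-z)^{-1}-(-\Delta_\sigma-z)^{-1}v_\eps\,\bigl(1+u_\eps(-\Delta_\sigma-z)^{-1}v_\eps\bigr)^{-1}u_\eps(-\Delta_\sigma-z)^{-1},
\end{equation*}
where $u_\eps,v_\eps$ are the rescaled analogues of $u,v$ from \eqref{uv}. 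The key object is thus the Birman-Schwinger operator $\mathcal B_\eps(z):=u_\eps(-\Delta_\sigma-z)^{-1}v_\eps$. Since $V$ (hence $V_\eps$) is supported in $\overline B\subseteq\Omega$, the kernel of $(-\Delta_\sigma-z)^{-1}$ restricted to $\overline B\times\overline B$ differs from the free kernel $\frac{1}{4\pi}\frac{e^{-\mu|x-y|}}{|x-y|}$ (or the appropriate $z$-dependent free resolvent kernel on $\R^3$) by a \emph{smooth} correction $R_\sigma(z;x,y)$, coming from the harmonic/boundary part that solves the relevant Robin or Dirichlet problem — this is the standard decomposition of the Green's function of $-\Delta_\sigma$ into the free Green's function plus a regular remainder. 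This remainder is exactly what encodes the boundary condition and will ultimately produce $-\Delta_{\alpha,\sigma}$ rather than the free point interaction.

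Next I would perform the scaling. Conjugating by the unitary dilation $(U_\eps f)(x)=\eps^{-3/2}f(x/\eps)$, the operator $\mathcal B_\eps(z)$ becomes, up to the factor $(1+\lambda\eps)$, an operator on $L^2(\eps^{-1}\Omega)$ with kernel $u(x)\bigl(\frac{1}{4\pi|x-y|}+\text{(lower order in }\eps)\bigr)v(y)$; more precisely the free part contributes $B_0$ plus an $O(\eps)$ analytic correction (from expanding $e^{-\mu\eps|x-y|}$ or the $z$-dependence), while the regular remainder $R_\sigma$ contributes $\eps\cdot\frac{1}{4\pi}\,|\langle v,\cdot\rangle|$-type rank-one-ish term: since $R_\sigma(z;\eps x,\eps y)=R_\sigma(z;0,0)+O(\eps)$ and $x,y$ range over the fixed support of $V$, one gets $\eps\,R_\sigma(z;0,0)\,\langle v,\cdot\rangle\,u+O(\eps^2)$. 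Collecting terms, $1+(1+\lambda\eps)\mathcal B_\eps(z)\cong (1+B_0)+\eps\,C(z)+o(\eps)$ where $C(z)=\lambda B_0 + R_\sigma(z;0,0)\,|u\rangle\langle v| + (\text{free }O(\eps)\text{ term})$, all as norm-convergent expansions in $\eps$ (this is where analyticity of $f(\eps)=1+\lambda\eps$ and of the free kernel in $\eps$ is used). Because $-1$ is a simple eigenvalue of $B_0$ with normalized eigenfunction $\varphi$, the operator $1+B_0$ has a one-dimensional kernel, and I would invert the singular part via the Feshbach/Schur complement (Grushin) method: write $L^2=\ker(1+B_0)\oplus\ker(1+B_0)^\perp$, invert the invertible block, and reduce the inversion of $(1+B_0)+\eps C(z)+o(\eps)$ to inverting a scalar function of $\eps$, namely
\begin{equation*}
m(\eps,z):=\langle\varphi, C(z)\varphi\rangle\,\eps + o(\eps).
\end{equation*}
The leading coefficient is $\langle\varphi,C(z)\varphi\rangle=\lambda\langle\varphi,B_0\varphi\rangle + R_\sigma(z;0,0)\,\langle v,\varphi\rangle\overline{\langle\varphi,u\rangle}+\dots$; using $B_0\varphi=-\varphi$ (so $\langle\varphi,B_0\varphi\rangle=-1$... one must be careful, $B_0$ is not self-adjoint, but $\langle u,\varphi\rangle=\overline{\langle\varphi,v\rangle}$ type identities from $u=\operatorname{sgn}(V)v$ handle this) one identifies, after dividing by $\eps$, the limiting scalar $-\lambda|\langle v,\varphi\rangle|^{-2}+$ ($z$-dependent free + boundary terms) $=\alpha+(\text{the resolvent-characterizing function of }-\Delta_{\alpha,\sigma})$. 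In other words, the reduced scalar equation in the limit $\eps\to0$ is precisely the Birman-Schwinger-type/Krein resolvent condition that defines $-\Delta_{\alpha,\sigma}$ with the stated value of $\alpha$; substituting back through the Grushin formula yields norm convergence of $(1+(1+\lambda\eps)\mathcal B_\eps(z))^{-1}$ (suitably interpreted, since the inverse itself blows up like $\eps^{-1}$ on the singular subspace, but the full resolvent expression stays bounded because it is sandwiched between $(-\Delta_\sigma-z)^{-1}v_\eps$ and $u_\eps(-\Delta_\sigma-z)^{-1}$, which are $O(\eps)$ in the right norms). Finally I would feed this back into the second resolvent identity and compare with the explicit Krein-type formula for $(-\Delta_{\alpha,\sigma}-z)^{-1}$ from Section \ref{se2}, obtaining norm resolvent convergence; a density/uniform-boundedness argument in $z$ then upgrades this from a single $z$ to the whole resolvent set in the usual way.

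The main obstacle I expect is twofold. First, the bookkeeping of the $\eps$-expansion of $\mathcal B_\eps(z)$ in operator norm (Hilbert-Schmidt norm, really, using the HLS/Hölder bounds mentioned for $B_0$): one must show the remainder terms are genuinely $o(\eps)$ uniformly, and that the "rank-one" boundary contribution $\eps R_\sigma(z;0,0)|u\rangle\langle v|$ is extracted with the correct constant — any error here changes $\alpha$. The smoothness of $R_\sigma$ up to the diagonal (which requires the $\mathcal C^2$ regularity of $\partial\Omega$ and $\mathcal C^1$ regularity of $b$, and is where the Robin vs. Dirichlet distinction enters, but only through the \emph{value} $R_\sigma(z;0,0)$, not qualitatively) is the analytic input that makes this work. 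Second, the Grushin reduction must be carried out so that the unbounded-in-$\eps$ part is manifestly cancelled by the $v_\eps$, $u_\eps$ prefactors; concretely one tracks the projection $P=|\varphi\rangle\langle\varphi|$ and shows $v_\eps^* \cdot (\text{singular }\eps^{-1}\text{ piece})\cdot u_\eps$ contributes a \emph{finite} limit, which is the one building the point interaction, while everything transverse to $\varphi$ converges to the regular part $(-\Delta_\sigma-z)^{-1}$. The hypothesis $\langle v,\varphi\rangle\ne0$ is precisely what guarantees the limiting scalar is non-degenerate (equivalently, $\alpha\ne+\infty$, i.e. the limit is a genuine nontrivial point interaction and not the unperturbed operator); the borderline case $\langle v,\varphi\rangle=0$ corresponds to a zero-energy resonance phenomenon alluded to in the abstract and would have to be excluded here.
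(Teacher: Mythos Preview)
Your approach is correct in outline and constitutes a genuinely different route from the paper's. The paper does \emph{not} expand the Birman--Schwinger operator on $\Omega$ directly; instead it uses an extension--restriction argument that treats the $\R^3$ result as a black box. Concretely, for fixed $z\in\C\setminus\R$ and $f\in L^2(\Omega)$, the paper sets $\psi:=(-\Delta_{\alpha,\sigma}+z)^{-1}f$, extends $\psi$ to an element $\widetilde{\psi}\in\mathcal{D}(-\Delta_\alpha)$ on $\R^3$ (Lemma~\ref{le:ext}), defines $\widetilde f:=(-\Delta_\alpha+z)\widetilde\psi$, and then constructs a correction $\widetilde r_\eps$ with $\|\widetilde r_\eps\|_{L^2(\R^3)}=o(1)$ so that $\widetilde f_\eps:=\widetilde f+\widetilde r_\eps$ satisfies $\big((-\Delta+V_\eps+z)^{-1}\widetilde f_\eps\big)_{|\Omega}=(-\Delta_\sigma+V_\eps+z)^{-1}f$. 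The correction $\widetilde r_\eps$ is built by solving a boundary value problem that annihilates the mismatch $\sigma\big((-\Delta+V_\eps+z)^{-1}\widetilde f\big)$ on $\partial\Omega$; the key analytic inputs are an enhanced $H^2(\Theta)$-convergence of the $\R^3$ resolvent away from the origin (Theorem~\ref{conv_tre}(ii)) and a uniform bound $\|(-\Delta_\sigma+V_\eps+z)^{-1}\|_{\mathcal B(L^2(\Omega);H^2(\Theta))}\lesssim 1$ (Lemma~\ref{int_le}). No Feshbach/Grushin reduction is redone on $\Omega$.

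Your strategy, by contrast, works entirely on $\Omega$ and re-runs the Albeverio et~al.\ singular expansion with the Green function decomposition $\mathcal G_z=\Gamma_z+h_z$: after rescaling, the regular part $h_z$ contributes the additional rank-one term $\eps\,h_z(0)\,|u\rangle\langle v|$ to the order-$\eps$ coefficient $C(z)$, and in the Grushin scalar this combines with the free $-\tfrac{\sqrt z}{4\pi}$ term to produce exactly $c_z(\alpha)^{-1}=\alpha+\tfrac{\sqrt z}{4\pi}-h_z(0)$ (cf.~\eqref{czalpha}), with $\alpha$ unchanged from the $\R^3$ value. This is essentially the mechanism used in \cite{GIP} for closed manifolds; the paper's Remark~1.5 notwithstanding, the obstruction to ``rescaling the space variables'' is bypassed because the Birman--Schwinger operator is already localized on $\supp(V)$ and only the \emph{kernel} needs to be rescaled. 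What the paper's route buys is that the delicate non-self-adjoint Feshbach analysis (your ``second obstacle'') is never repeated---it is inherited wholesale from the $\R^3$ theorem; what your route buys is self-containment and a transparent identification of how the boundary enters through the single number $h_z(0)$. One terminological slip: in your last sentence the roles are reversed---the case $\langle v,\varphi\rangle\neq 0$ \emph{is} the zero-energy resonance (Remark~\ref{re:re}), while $\langle v,\varphi\rangle=0$ corresponds to a zero-energy eigenvalue and yields the trivial limit $-\Delta_\sigma$.
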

\begin{remark}
We point out that the spectral assumption on the potential $V$ is insensitive on the Dirichlet or Robin boundary condition on $\partial\Omega$.
\end{remark}

\begin{remark}
When $-1$ is not an eigenvalue of $B_0$, $-\Delta_{\sigma}+V_{\eps}$ converges to the free Laplace operator $-\Delta_{\sigma}$, that is $-\Delta_{\alpha,\sigma}$ with $\alpha=+\infty$. The same happens when $-1$ is a simple eigenvalue but the corresponding eigenfunction $\varphi$ is orthogonal to $V$. The non-orthogonality condition $\langle v,\varphi\rangle\neq 0$ guarantees instead that in the limit as $\eps\to 0$ we obtain a non-trivial point interaction at the origin, i.e.~an operator not coinciding with the free Laplacian $\Delta_\sigma$.
\end{remark}

\begin{remark}
Theorem \ref{th:main} could be extended, with minor modifications, to the situation when $-1$ is a degenerate eigenvalue for $B_0$, in which case the limit as $\eps\to 0$ produces a non-trivial point interaction if and only if at least one of the eigenfunctions satisfies the non-orthogonality condition $\langle\varphi,v\rangle\neq 0$. 
\end{remark}

\begin{remark}
As is well known, the analogous result holds on $\R^3$ (see Theorem \ref{conv_tre} below). The proof on $\R^3$ relies upon a precise spectral analysis of the resolvent expansion of $-\Delta+V_{\eps}$, together with a rescaling of the space variables, an argument which cannot be directly adapted to the case of a bounded domain. Here we exploit instead a suitable extension-restriction approach, which allows to transfer, after some adaptation, the results on $\R^3$ to our framework.
\end{remark}
\begin{remark}
As in the case of the whole space $\R^3$, the distortion parameter $\lambda\neq 0$ in \eqref{eq:Veps}
is related to a point interaction with $\alpha\neq 0$. When $\lambda=0$, one obtains the resonant point interaction with $\alpha=0$ approximated by the pure scaling $
	V_{\eps}(x):=\frac{1}{\eps^2}V\left(\frac{x}{\eps}\right)$.
	\end{remark}

\begin{remark}
We restrict here to the three-dimensional case, but our arguments can be easily extended to bounded sets in dimension one and two. In the latter cases however the procedure of recovering $-\Delta_{\alpha,\sigma}$ by means of rescaled regular Schr\"odinger operators is easier than in $\R^3$, because spectral conditions and non generic potentials are not needed. For a different type of approximation in dimension two and three, we refer to \cite{BO23,Bo24}.
\end{remark}

\begin{remark}\label{re:re}
When the potential $V$ satisfies the assumptions of Theorem \ref{th:main}, we say that $-\Delta_{\sigma}+V$ has a (simple) zero-energy resonance. This notion is the localization to the bounded domain $\Omega$ of the analogous notion of resonance on $\R^3$ based on the Birman-Schwinger operator -- see Lemma \ref{le:local-res}. We point out that on $\R^3$ an equivalent notion of resonance (perhaps more common in the PDE framework) relies on the occurrence of non-$L^2$ solutions to $(-\Delta+V)\psi=0$. This notion is lost in the bounded case, and we refer to Section \ref{sec:res} for a more detailed discussion of this issue.
\end{remark}

The paper is organized as follows. In Section \ref{se2} we recall the construction and the main properties of the self-adjoint operators $-\Delta_{\sigma}$ and $-\Delta_{\alpha,\sigma}$. In Section \ref{se3}, after some preparatory technical facts, we prove our main result, Theorem \ref{th:main}. Section \ref{sec:res} is devoted to further comments and perspectives about the approximation mechanism of a point interaction in dimension three, with a focus on the role of spectral assumptions and the notion of zero-energy resonances.

\section{Point interaction on a bounded domain}\label{se2}
In this section we review the construction and the main properties of the self-adjoint operators $-\Delta_{\sigma}$ and $-\Delta_{\alpha,\sigma}$ on $L^2(\Omega)$. Consider the map
\begin{equation}\label{eq:def_sigma}
	\sigma(\psi):=(\partial_n\psi)_{|\partial\Omega}+b\,\psi_{|\partial\Omega},
\end{equation}
which in the limit case $b=+\infty$ is interpreted as
\begin{equation}\label{eq:def_sigmaD}
\sigma(\psi):=\psi_{|\partial\Omega}.
\end{equation}
The map $\sigma$ is well-defined and continuous from $H^2(\Omega)\to L^2(\Omega)$, and surjective onto $H^{1/2}(\partial\Omega)$ (respectively onto $H^{3/2}(\partial\Omega)$ in the Dirichlet case $b=+\infty$). Given $\psi\in H^2(\Omega)$, we can then write the Robin or Dirichlet boundary conditions \eqref{robin} and \eqref{D} as $\sigma(\psi)=0$. The Laplace operator $-\Delta_\sigma \psi=\Delta\psi $ with domain  
\begin{equation}\label{h2b}
		\mathcal D(-\Delta_\sigma):=H^2_{\sigma}(\Omega)=\{\psi\in H^2(\Omega)\,|\,\sigma(\psi)=0\},
\end{equation}
as it is well-known, is self-adjoint
(see for example \cite{ArendtPDE} for the general theory and \cite{BL07, BL12} for details in the framework of quasi-boundary triples). 

The Green function $\mathcal{G}^y_z(x)$ for $-\Delta_{\sigma}$ can be obtained as follows: for every $z\in\C\setminus\R$ and $y\in\Omega$, let $h_z^y$ be the solution to
\begin{equation}\label{def_dist}
\begin{cases}
(-\Delta+z)h_z^y=0,&x\in\Omega\\
\sigma(h_z^y)=-\sigma(\Gamma_z^y),&x\in\partial\Omega,
\end{cases}
\end{equation}
where $\Gamma^y_z(x):=\frac{e^{-\sqrt{z}|x-y|}}{4\pi|x-y|}$ is the Green function for the Laplacian on $\R^3$ (we use the branch of the square root such that $\RE\sqrt{z}>0$). Since $\Gamma_z^y$ is smooth on $\partial\Omega$, $h_z^y$ belongs to $H^2(\Omega)$, and $\mathcal{G}^y_z(x)$ is then given by
\begin{equation}\label{green_b}
\mathcal{G}^y_z(x)=\Gamma_z(x,y)+h_z^y(x).
\end{equation}
To lighten the notation, we shall write $\Gamma_z:=\Gamma^{0}_z$, $h_z^0=h_z$ and $\mathcal{G}_z:=\mathcal{G}^{0}_z$.

Consider now the symmetric, densely defined operator $S_{\sigma}$, given by
$$\mathcal{D}(S_{\sigma})=\{f\in H_{\sigma}^2(\Omega)\,|\,0\not\in\supp(f)\}, \quad S_{\sigma}f=-\Delta_{\sigma} f.$$
The operator $S_{\sigma}$ admits a one parameter family of self-adjoint extensions $-\Delta_{\alpha,\sigma}$ on $L^2(\Omega)$. The extension with $\alpha=\infty$ is $-\Delta_{\sigma}$, whilst the others represent non trivial point interactions at the origin, and they are given (see e.g.~\cite{BFM_bounded}, where the case of Dirichlet boundary condition is treated) by
\begin{gather}\label{def:da}
\mathcal{D}(-\Delta_{\alpha,\sigma})=\big\{\psi\in L^2(\Omega)\,|\,\psi=\phi_{z}+c_z(\alpha)\phi_z(0)\mathcal{G}_z,\,\phi_z\in H_{\sigma}^2(\Omega)\big\},\\
(-\Delta_{\alpha,\sigma}+z)\psi=(-\Delta_\sigma+z)\phi_z,
\end{gather}
for every fixed $z\in\C\setminus\R$, where
\begin{equation}\label{czalpha}
c_z(\alpha)=\left(\alpha+\frac{\sqrt{z}}{4\pi}-h_z(0)\right)^{-1}.
\end{equation}
The resolvent of $-\Delta_{\alpha,\sigma}$ is given, for $z\in\C\setminus\R$, by the explicit formula
\begin{equation}\label{res_form_bounded}(-\Delta_{\alpha,\sigma}+z)^{-1}=(-\Delta_{\sigma}+z)^{-1}+c_z(\alpha)\big|\mathcal{G}_z\big\rangle\big\langle\overline{\mathcal{G}_z}\big|.
\end{equation}
Now we show that, in a sense, the operator $-\Delta_{\alpha,\sigma}$ can be obtained by suitably restricting to $\Omega$ the analogous Schr\"odinger operator with point interaction $-\Delta_{\alpha}$ on $L^2(\R^3)$. First of all, let us recall (see e.g.~\cite{albeverio-solvable}) that we have the characterization
\begin{gather}\label{def:dr}
\mathcal{D}(-\Delta_{\alpha})=\big\{\psi\in L^2(\R^3)\,|\,\psi=\phi_{z}+d_z(\alpha)\phi_z(0)\Gamma_z,\,\phi_z\in H^2(\R^3)\big\},\\
(-\Delta_{\alpha}+z)\psi=(-\Delta+z)\phi_z,
\end{gather}
for every fixed $z\in\C\setminus\R$, where
$$d_z(\alpha):=\left(\alpha+\frac{\sqrt{z}}{4\pi}\right)^{-1}.$$
Again, the case $\alpha=+\infty$ corresponds to the unperturbed Laplace operator $-\Delta$ on $L^2(\R^3)$. The resolvent of $-\Delta_{\alpha}$ is given, for $z\in\C\setminus\R$, by the explicit formula
\begin{equation}\label{eq:ress}
(-\Delta_{\alpha}+z)^{-1}=(-\Delta+z)^{-1}+d_z(\alpha)\big|\Gamma_z\big\rangle\big\langle\overline{\Gamma_z}\big|.
\end{equation}

\section{Proof of the main result}\label{se3}
In this section we prove our main result, Theorem \ref{th:main}, on the approximation of a point interaction on bounded domains. We exploit a suitable extension-restriction argument, which allows to reduce to the corresponding result on $\R^3$. 

 We start with the following technical lemma, which shows that the self-adjoint operator $-\Delta_{\alpha}$ on $L^2(\R^3)$ can be obtained as a suitable extension of the corresponding operator $-\Delta_{\alpha,\sigma}$ on $L^2(\Omega)$.
\begin{lemma}\label{le:ext}
	There exists an extension operator $\mathcal{D}(-\Delta_{\alpha,\sigma})\ni \psi\mapsto\widetilde{\psi}\in\mathcal{D}(-\Delta_{\alpha})$ such that $\widetilde{\psi}_{|\Omega}=\psi$ and
	$$(\Delta_{\alpha}\widetilde{\psi}\,)_{|\Omega}=\Delta_{\alpha,\sigma}\psi,\qquad\|\Delta_{\alpha}\widetilde{\psi}\|_{L^2(\R^3)}\lesssim \|\Delta_{\alpha,\sigma}\psi\|_{L^2(\Omega)},$$
	for every $\psi\in\mathcal{D}(-\Delta_{\alpha,\sigma})$.
\end{lemma}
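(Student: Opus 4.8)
The plan is to build the extension $\widetilde\psi$ by splitting the structured element $\psi\in\mathcal D(-\Delta_{\alpha,\sigma})$ according to the decomposition in \eqref{def:da}. Fix once and for all a reference point $z_0\in\C\setminus\R$ (say $z_0=1$). Write $\psi=\phi_{z_0}+c_{z_0}(\alpha)\phi_{z_0}(0)\,\mathcal G_{z_0}$ with $\phi_{z_0}\in H^2_\sigma(\Omega)$, and recall from \eqref{green_b} that $\mathcal G_{z_0}=\Gamma_{z_0}+h_{z_0}$, where $h_{z_0}\in H^2(\Omega)$ and $\Gamma_{z_0}$ is the free Green function on $\R^3$, which already belongs to $L^2$ near the origin and is smooth away from it. The singular part $\Gamma_{z_0}$ extends canonically to all of $\R^3$; the two regular pieces $\phi_{z_0}$ and $h_{z_0}$ are genuine $H^2(\Omega)$ functions, so I would apply a fixed bounded Sobolev extension operator $E\colon H^2(\Omega)\to H^2(\R^3)$ (available since $\partial\Omega\in\mathcal C^2$) to each of them, and then cut off far from $\Omega$ with a fixed smooth bump $\chi\equiv1$ on a neighbourhood of $\overline\Omega$. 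Concretely, set
\[
\widetilde\psi:=\chi\,E\phi_{z_0}\;+\;c_{z_0}(\alpha)\phi_{z_0}(0)\Big(\Gamma_{z_0}+\chi\,E h_{z_0}\Big).
\]

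The next step is to verify that $\widetilde\psi\in\mathcal D(-\Delta_\alpha)$ using the characterization \eqref{def:dr}. Because $\chi E\phi_{z_0}$ and $\chi Eh_{z_0}$ are in $H^2(\R^3)$ and the coefficient of the singular term $\Gamma_{z_0}$ is $c_{z_0}(\alpha)\phi_{z_0}(0)$, I need the matching condition $d_{z_0}(\alpha)\,\widetilde\phi_{z_0}(0)=c_{z_0}(\alpha)\phi_{z_0}(0)$ where $\widetilde\phi_{z_0}:=\chi E\phi_{z_0}+c_{z_0}(\alpha)\phi_{z_0}(0)\chi Eh_{z_0}$ is the regular part of $\widetilde\psi$. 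Evaluating at the origin, $\chi\equiv1$ there, so $\widetilde\phi_{z_0}(0)=\phi_{z_0}(0)+c_{z_0}(\alpha)\phi_{z_0}(0)h_{z_0}(0)=\phi_{z_0}(0)\bigl(1+c_{z_0}(\alpha)h_{z_0}(0)\bigr)$. From \eqref{czalpha}, $c_{z_0}(\alpha)^{-1}=\alpha+\tfrac{\sqrt{z_0}}{4\pi}-h_{z_0}(0)=d_{z_0}(\alpha)^{-1}-h_{z_0}(0)$, whence $1+c_{z_0}(\alpha)h_{z_0}(0)=c_{z_0}(\alpha)/d_{z_0}(\alpha)$, so indeed $d_{z_0}(\alpha)\widetilde\phi_{z_0}(0)=c_{z_0}(\alpha)\phi_{z_0}(0)$. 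This is exactly the algebraic identity that makes the construction work, and I expect this bookkeeping with $c_z$ versus $d_z$ to be the main (though elementary) point requiring care.

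Once membership is established, the action and bound follow. On $\Omega$ we have $\chi\equiv1$ and $E$ restricts to the identity, so $\widetilde\psi_{|\Omega}=\phi_{z_0}+c_{z_0}(\alpha)\phi_{z_0}(0)(\Gamma_{z_0}+h_{z_0})=\psi$. Applying $(-\Delta_\alpha+z_0)$ to $\widetilde\psi$ and $(-\Delta_{\alpha,\sigma}+z_0)$ to $\psi$ gives $(-\Delta+z_0)\widetilde\phi_{z_0}$ and $(-\Delta_\sigma+z_0)\phi_{z_0}$ respectively; since $\widetilde\phi_{z_0}$ agrees with $\phi_{z_0}$ on $\Omega$ and both are $H^2$, the restriction of the former to $\Omega$ equals the latter, hence $(-\Delta_\alpha+z_0)\widetilde\psi\restriction_\Omega=(-\Delta_{\alpha,\sigma}+z_0)\psi$ and therefore $(\Delta_\alpha\widetilde\psi)_{|\Omega}=\Delta_{\alpha,\sigma}\psi$. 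For the norm estimate, $\|\Delta_\alpha\widetilde\psi\|_{L^2(\R^3)}\le\|(-\Delta_\alpha+z_0)\widetilde\psi\|+|z_0|\|\widetilde\psi\|$; the first term is $\|(-\Delta+z_0)\widetilde\phi_{z_0}\|_{L^2(\R^3)}\lesssim\|\widetilde\phi_{z_0}\|_{H^2(\R^3)}\lesssim\|\phi_{z_0}\|_{H^2(\Omega)}+|\phi_{z_0}(0)|$, using boundedness of $E$ and of point evaluation $H^2(\Omega)\hookrightarrow\mathcal C(\overline\Omega)$, and similarly $\|\widetilde\psi\|_{L^2(\R^3)}\lesssim\|\phi_{z_0}\|_{H^2(\Omega)}+|\phi_{z_0}(0)|$. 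Finally, since $z_0$ is fixed, $\|\phi_{z_0}\|_{H^2(\Omega)}\le\|(-\Delta_\sigma+z_0)\phi_{z_0}\|_{L^2(\Omega)}=\|(-\Delta_{\alpha,\sigma}+z_0)\psi\|_{L^2(\Omega)}\lesssim\|\Delta_{\alpha,\sigma}\psi\|_{L^2(\Omega)}+\|\psi\|_{L^2(\Omega)}$, and the latter $L^2(\Omega)$ norm is itself controlled by $\|\Delta_{\alpha,\sigma}\psi\|_{L^2(\Omega)}$ up to the fixed resolvent bound (equivalently, one absorbs it by possibly enlarging $|z_0|$); likewise $|\phi_{z_0}(0)|\lesssim\|\phi_{z_0}\|_{H^2(\Omega)}$. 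Chaining these inequalities yields $\|\Delta_\alpha\widetilde\psi\|_{L^2(\R^3)}\lesssim\|\Delta_{\alpha,\sigma}\psi\|_{L^2(\Omega)}$, and linearity of $\psi\mapsto\widetilde\psi$ is clear from the construction. The only subtlety worth flagging is that the implicit constants depend on $\alpha$ (through $c_{z_0}(\alpha)$) and on the fixed choice of $z_0$, $\chi$, $E$, which is harmless for the intended application.
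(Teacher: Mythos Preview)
Your approach is essentially identical to the paper's: decompose $\psi$ via \eqref{def:da}, extend the regular pieces $\phi_{z_0}$ and $h_{z_0}$ to $H^2(\R^3)$, keep $\Gamma_{z_0}$ as is, and verify membership in $\mathcal D(-\Delta_\alpha)$ through the algebraic identity $c_{z_0}(\alpha)\phi_{z_0}(0)=d_{z_0}(\alpha)\widetilde\phi_{z_0}(0)$ (which the paper states as ``a direct computation shows'' and you spell out). Two small remarks: your parenthetical example $z_0=1$ is real and hence not in $\C\setminus\R$; and, like the paper, your final chain really yields $\|(-\Delta_\alpha+z_0)\widetilde\psi\|_{L^2(\R^3)}\lesssim\|(-\Delta_{\alpha,\sigma}+z_0)\psi\|_{L^2(\Omega)}$, which is the graph-norm bound actually used downstream---the claimed absorption of $\|\psi\|_{L^2(\Omega)}$ into $\|\Delta_{\alpha,\sigma}\psi\|_{L^2(\Omega)}$ alone is not generally valid.
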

\begin{proof}
	Let us fix $z\in \C\setminus\R$, and according to \eqref{def:da} we write
	$$\psi=\phi_{z}+c_z(\alpha)\phi_z(0)\mathcal{G}_z.$$
	Using the representation \eqref{green_b} for the Green function $\mathcal{G}_z$, we can rewrite
	$$\psi=\phi_{z}+c_z(\alpha)\phi_z(0)h_z+c_z(\alpha)\phi_z(0)\Gamma_z.$$
	Consider an $H^2(\R^3)$ extension $\widetilde{\phi}_z$ of $\phi_z$ (see e.g.~\cite{ArendtPDE}), with $\|\widetilde{\phi}_z\|_{H^2(\R^3)}\lesssim \|\phi_z\|_{H^2(\Omega)}$, and set $\widetilde{F}_z:=\widetilde{\phi}_z+c_z(\alpha)\phi_z(0)\widetilde{h}_z$, where $\widetilde{h}_z$ is an $H^2(\R^3)$ extension of $h_z$. In view of the continuous embedding $H^2(\R^3)\hookrightarrow \mathcal{C}_b(\R^3)$, we have $\|\widetilde{F}_z\|_{H^2(\R^3)}\lesssim\|\phi_z\|_{H^2(\Omega)}$. Next, let us set $\widetilde{\psi}:=\widetilde{F}_z+c_z(\alpha)\phi_z(0)\Gamma_z$, and observe that by construction $\widetilde{\psi}_{|\Omega}=\psi$. Moreover, a direct computation shows that
	$$c_z(\alpha)\phi_z(0)=d_z(\alpha)F_z(0).$$
	Hence, we deduce from representation \eqref{def:dr} that $\widetilde{\psi}\in\mathcal{D}(-\Delta_{\alpha})$. In particular,
	$$\big((-\Delta_{\alpha}+z)\widetilde{\psi}\big)_{|\Omega}=\big((-\Delta+z)\widetilde{F}_z\big)_{|\Omega}=(-\Delta_{\sigma}+z)\phi_z=(-\Delta_{\alpha,\sigma}+z)\psi,$$
	and similarly
	\begin{equation*}
		\begin{split}
			\|(-\Delta_{\alpha}+z)\widetilde{\psi}\|_{L^2(\R^3)}&=\|(-\Delta+z)\widetilde{F}_z\|_{L^2(\R^3)}\lesssim \|\widetilde{F}_z\|_{H^2(\R^3)}\lesssim\|\phi_z\|_{H^2(\Omega)}\\
			&\lesssim\|(-\Delta_{\sigma}+z)\phi_z\|_{L^2(\Omega)}=\|(-\Delta_{\alpha,\sigma}+z)\psi\|_{L^2(\Omega)},
		\end{split}
	\end{equation*}
	which proves the claim.
\end{proof}

As a second preliminary fact, we recall the standard approximation theory for a point interaction on $\R^3$, with a further information on the convergence topology. To this end, we introduce the map $$\widetilde{B}_0f:=u(-\Delta)^{-1}vf,$$ where it is understood that $(-\Delta)^{-1}=\frac{1}{|x|}*$ is the convolution with the Newtonian potential and $u,\ v$ are defined as in \eqref{uv}. In view of the assumptions on $V$, $\widetilde{B}_0$ defines a compact operator on $L^2(\R^3)$.

\begin{theorem}\label{conv_tre}
Suppose that $-1$ is a simple eigenvalue for $\widetilde{B}_0$, and let $\widetilde{\varphi}$ be the associated normalized eigenfunction. Suppose moreover that $\langle v,\widetilde{\varphi}\rangle\neq 0$, and set $\alpha=-\lambda|\langle v,\widetilde{\varphi}\rangle|^{-2}\neq +\infty$. The following facts hold.
\begin{itemize}
	\item[(i)] $-\Delta+V_{\eps}$ converges to $-\Delta_{\alpha}$ as $\eps\to 0$, in the norm resolvent sense in $L^2(\R^3)$.	
	\item[(ii)] Fix an open, bounded set $\Theta\subseteq\R^3$, with $0\not\in\overline{\Theta}$. For every $z\in\C\setminus\R$, $(-\Delta+V_{\eps}+z)^{-1}$ converges as $\eps\to 0$ to $(-\Delta_{\alpha}+z)^{-1}$, in the norm operator topology of $\mathcal{B}(L^2(\R^3);H^2(\Theta))$.
\end{itemize}
\end{theorem}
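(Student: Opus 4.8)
The plan is to prove Theorem~\ref{conv_tre} by first establishing part~(i) via the classical Birman--Schwinger/resolvent-expansion machinery for point interactions on $\R^3$ (as in \cite{albeverio-solvable}), and then upgrading the convergence in part~(ii) by exploiting elliptic regularity away from the origin. For part~(i), the starting point is the second resolvent (Konno--Kuroda) formula: writing $V_\eps = u_\eps v_\eps$ with $u_\eps(x) = \eps^{-1}(1+\lambda\eps)^{1/2}\operatorname{sgn}(V)|V|^{1/2}(x/\eps)$ and $v_\eps$ analogously, one has
\begin{equation*}
(-\Delta+V_\eps+z)^{-1} = (-\Delta+z)^{-1} - (-\Delta+z)^{-1} v_\eps\,\bigl(\mathbbm{1}+B_\eps(z)\bigr)^{-1} u_\eps\,(-\Delta+z)^{-1},
\end{equation*}
where $B_\eps(z) := u_\eps(-\Delta+z)^{-1}v_\eps$. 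The key computation is to rescale the space variable $x\mapsto \eps x$ to see that $B_\eps(z)$ is unitarily equivalent to $(1+\lambda\eps)\,u\,(-\Delta+\eps^2 z)^{-1}v$, which converges in Hilbert--Schmidt norm, as $\eps\to 0$, to $\widetilde B_0 = u(-\Delta)^{-1}v$ (the convolution kernel $\frac{e^{-\sqrt{\eps^2 z}|x-y|}}{4\pi|x-y|} \to \frac{1}{4\pi|x-y|}$, with uniform domination by the Newtonian potential, and the factor $1+\lambda\eps\to 1$). One must retain the first-order correction: $B_\eps(z) = \widetilde B_0 + \eps\bigl(\lambda\,u(-\Delta)^{-1}v - \tfrac{\sqrt z}{4\pi}|v\rangle\langle v|\bigr) + o(\eps)$ in operator norm, where the $O(\eps)$ term comes from Taylor-expanding $e^{-\sqrt{\eps^2 z}|x-y|} = 1 - \eps\sqrt z\,|x-y| + O(\eps^2)$ and noting that the $|x-y|$ kernel integrates against $v\otimes v$ to give the rank-one piece.

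Next I would analyze the inverse $(\mathbbm 1 + B_\eps(z))^{-1}$ near the eigenvalue $-1$ of $\widetilde B_0$. Since $-1$ is assumed simple with normalized eigenfunction $\widetilde\varphi$, the Riesz projection $P$ onto $\ker(\mathbbm 1 + \widetilde B_0) = \mathrm{span}\{\widetilde\varphi\}$ varies analytically, and on the complementary subspace $\mathbbm 1 + B_\eps(z)$ stays boundedly invertible. On the one-dimensional range of $P$, the relevant scalar is $\langle\widetilde\varphi,(\mathbbm 1+B_\eps(z))\widetilde\varphi\rangle$, which to leading order equals $\eps\bigl(\lambda - \tfrac{\sqrt z}{4\pi}|\langle v,\widetilde\varphi\rangle|^2\bigr) + o(\eps)$ — here one uses $\langle\widetilde\varphi,u(-\Delta)^{-1}v\widetilde\varphi\rangle = \langle\widetilde\varphi,\widetilde B_0\widetilde\varphi\rangle = -1$ so that the $\lambda\,u(-\Delta)^{-1}v$ term contributes $-\lambda$... more carefully, $\langle\widetilde\varphi,(\widetilde B_0 + \eps(\lambda\,u(-\Delta)^{-1}v - \tfrac{\sqrt z}{4\pi}|v\rangle\langle v|))\widetilde\varphi\rangle + \langle\widetilde\varphi,\widetilde\varphi\rangle = \eps\lambda\langle\widetilde\varphi, u(-\Delta)^{-1}v\widetilde\varphi\rangle - \eps\tfrac{\sqrt z}{4\pi}|\langle v,\widetilde\varphi\rangle|^2$; using $u(-\Delta)^{-1}v\widetilde\varphi = -\widetilde\varphi$ this is $-\eps(\lambda + \tfrac{\sqrt z}{4\pi}|\langle v,\widetilde\varphi\rangle|^2)$, whence the blow-up rate is $\eps^{-1}$. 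Feeding this through, the singular part of $v_\eps(\mathbbm 1 + B_\eps(z))^{-1}u_\eps$ produces, after the $x\mapsto\eps x$ rescaling, a rank-one operator $|v_\eps\widetilde\varphi_\eps\rangle\langle \cdots|$ whose kernel, paired with $(-\Delta+z)^{-1}$ on both sides, converges to $d_z(\alpha)|\Gamma_z\rangle\langle\overline{\Gamma_z}|$ with $d_z(\alpha) = (\alpha + \tfrac{\sqrt z}{4\pi})^{-1}$ and $\alpha = -\lambda|\langle v,\widetilde\varphi\rangle|^{-2}$; the mechanism is that $\eps^{-1}(-\Delta+z)^{-1}v_\eps\widetilde\varphi_\eps \to -|\langle v,\widetilde\varphi\rangle|\,\Gamma_z$ in $L^2$. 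Matching with \eqref{eq:ress} gives norm resolvent convergence to $-\Delta_\alpha$, proving~(i).

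For part~(ii), fix $\Theta$ with $0\notin\overline\Theta$ and pick a cutoff $\chi\in C_c^\infty(\R^3)$ equal to $1$ on a neighborhood of $\overline\Theta$ with $0\notin\supp\chi$. Write $R_\eps := (-\Delta+V_\eps+z)^{-1}$ and $R := (-\Delta_\alpha+z)^{-1}$. For $\eps$ small, $V_\eps$ is supported away from $\supp\chi$, so $\chi R_\eps f$ satisfies $(-\Delta+z)(\chi R_\eps f) = \chi f + [-\Delta,\chi](R_\eps f)$, and the commutator is a first-order operator with coefficients supported in $\supp\chi$. Likewise $(-\Delta+z)(\chi Rf) = \chi f + [-\Delta,\chi](Rf)$ since near $\supp\chi$ the function $Rf$ solves the free equation (the $\Gamma_z$ singularity sits at the origin, outside $\supp\chi$). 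Thus
\begin{equation*}
\chi(R_\eps - R)f = -(-\Delta+z)^{-1}\Bigl([-\Delta,\chi]\bigl((R_\eps - R)f\bigr)\Bigr) + \bigl(\text{lower order}\bigr),
\end{equation*}
and since $(-\Delta+z)^{-1}\colon H^s\to H^{s+2}$ and $[-\Delta,\chi]\colon H^2\to H^1$ with local support, one bootstraps: $\|\chi(R_\eps-R)f\|_{H^2} \lesssim \|(R_\eps-R)f\|_{H^1(\supp\chi)} \lesssim \|(R_\eps - R)f\|_{L^2} \cdot C$, where the gain from $L^2$ to $H^1$ on a region where both functions are harmonic-plus-$\chi f$ uses interior elliptic estimates. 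Since $\|R_\eps - R\|_{L^2\to L^2}\to 0$ by part~(i), this yields $\|R_\eps - R\|_{L^2\to H^2(\Theta)}\to 0$. The main obstacle is bookkeeping the commutator terms carefully so that the elliptic bootstrap closes — specifically, making sure that the a priori $L^2$-control from~(i) genuinely upgrades to $H^2$ locally without circular dependence; this is handled by a standard iteration (first get $H^1$ locally, then $H^2$ locally) exploiting that $[-\Delta,\chi]$ drops one derivative and is supported where the functions are smooth.
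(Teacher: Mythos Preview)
Your treatment of part~(i) is essentially the content of \cite[Theorem~1.2.5]{albeverio-solvable}, which the paper simply cites without reproducing; your sketch follows the same mechanism. One minor slip: the rank-one correction at order $\eps$ in the expansion of $B_\eps(z)$ has kernel $u(x)v(y)$, not $v(x)v(y)$, so the operator is $|u\rangle\langle v|$ rather than $|v\rangle\langle v|$. This does not affect the final formula for $\alpha$ but the intermediate scalar $\langle\widetilde\varphi,u\rangle\langle v,\widetilde\varphi\rangle$ needs a word of justification before it is identified with $|\langle v,\widetilde\varphi\rangle|^2$.

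For part~(ii) your route is genuinely different from the paper's. The paper stays inside the explicit factorization
\[
(-\Delta+V_\eps+z)^{-1}=(-\Delta+z)^{-1}-(1+\lambda)A_\eps(z)\,\eps\bigl[1+B_\eps(z)\bigr]^{-1}C_\eps(z)
\]
from \cite{albeverio-solvable} and observes that the only factor that sees the target space $H^2(\Theta)$ is $A_\eps(z)$, whose integral kernel $\Gamma_z^{\eps y}(x)v(y)$ is smooth in $x\in\overline\Theta$ uniformly in $y\in\supp V$; a pointwise Lipschitz bound $|\Gamma_z^{\eps y}(x)-\Gamma_z(x)|\lesssim\eps$ then gives $\|A_\eps-A\|_{L^2\to H^2(\Theta)}=O(\eps)$ directly. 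Your argument instead uses only the $L^2$ convergence from~(i) together with the fact that, for $\eps$ small, $(R_\eps-R)f$ solves the homogeneous equation $(-\Delta+z)g=0$ on any open set avoiding the origin, and upgrades via interior elliptic regularity. This is valid and in some ways more robust, since it does not revisit the internal structure of the resolvent and would apply to any approximation scheme once norm-resolvent convergence in $L^2$ is known; the paper's approach, by contrast, is shorter and delivers an explicit rate. Your commutator-and-bootstrap packaging is heavier than necessary: once you note that $(-\Delta+z)(R_\eps-R)f=0$ on a fixed neighborhood $U$ of $\overline\Theta$ with $0\notin\overline U$, the standard interior estimate $\|g\|_{H^2(\Theta)}\lesssim\|g\|_{L^2(U)}$ for solutions of $(-\Delta+z)g=0$ gives the conclusion in one step, without iterating through $H^1$ or tracking commutator supports.
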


\begin{proof}[Proof of Theorem \ref{conv_tre}]
Part (i) is a particular instance of the general approximation result in $\R^3$ \cite[Theorem 1.2.5]{albeverio-solvable}, to which we refer. 
In order to prove part (ii), we recall that (see formula (1.2.16) in \cite{albeverio-solvable})
\begin{equation}\label{kku}
(-\Delta+V_{\eps}+z)^{-1}=(-\Delta+z)^{-1}-(1+\lambda)A_{\eps}(z)\eps[1+B_{\eps}(z)]^{-1}C_{\eps}(z),
\end{equation}
where
\begin{equation*}
	\begin{split}
(A_{\eps}(z)f)(x)&:=\int_{\R^3}\Gamma_z^{\eps y}(x)v(y)f(y)dy\\
(B_{\eps}(z)f)(x)&:=(1+\lambda\eps)u(x)\int_{\R^3}\Gamma_{\eps z}^{y}(x)v(y)f(y)dy\\
(C_{\eps}(z)f)(x)&:=u(x)\int_{\R^3}\Gamma^y_z(\eps x)f(y)dy,
\end{split}
\end{equation*}
The limits of $A_{\eps}(z)$, $\eps[1+B_{\eps}(z)]^{-1}$ and $C_{\eps}(z)$ as $\eps\to 0$ exist (in particular when $\langle v,\widetilde{\varphi}\rangle=0$ the operator $[1+B_{\eps}(z)]$ has a bounded inverse, and the r.h.s.~of \eqref{kku} produces the free resolvent in the limit). 
Now, the following fact holds true:
\begin{itemize}
\item[(*)] $A_{\eps}(z)$, that is the operator with integral kernel $\Gamma_z^{\eps y}(x)v(y)$, converges as $\eps\to 0$, in the norm operator topology of $\mathcal{B}(L^2(\R^3);H^2(\Theta))$, to the operator $A$ with integral kernel $\Gamma_z(x)v(y)$.
\end{itemize}
To this end, observe that for $\eps$ small enough we have
$$|\Gamma_z^{\eps y}(x)-\Gamma_z(x)|\lesssim\eps,$$
uniformly for $x\in\Theta$ and $y\in\supp(V)$. Hence
\begin{align*}
&\|A_{\eps}f-Af\|_{H^2(\Theta)}=\Big\|\int_{\R^3}\big(\Gamma_z^{\eps y}-\Gamma_z\big)v(y)f(y)dy\Big\|_{H^2_x(\Theta)}\\
&\quad\lesssim \int_{\supp(V)}\|\Gamma_z^{\eps y}-\Gamma_z\|_{L^2_x(\Theta)}v(y)|f(y)|dy\lesssim\eps\|V\|^{1/2}_{L^1(\R^3)}\|f\|_{L^2(\R^3)},
\end{align*}
which proves (*), as the hypothesis that $V$ is compactly supported guarantees in particular $\|V\|_{L^1(\R^3)}\lesssim\|V\|_{L^{3/2}(\R^3)}$.
\end{proof}


Next we show that eigenfunctions for the operator $\widetilde{B}_0$ on $L^2(\R^3)$ localize to eigenfunctions for $B_0$ on $L^2(\Omega)$.

\begin{lemma}\label{le:local-res}
The following facts hold. 
\begin{itemize}
	\item[(i)] Suppose that $\varphi\in L^2(\Omega)$ solves $B_0\varphi=-\varphi$, and let $\widetilde{\varphi}$ be the extension by zero of $\varphi$ outside $\Omega$. Then $\widetilde{\varphi}$ satisfies $\widetilde{B}_0\widetilde{\varphi}=-\widetilde{\varphi}$.
	\item[(ii)] Conversely, if $\widetilde{\varphi}\in L^2(\R^3)$ solves $\widetilde{B}_0\widetilde{\varphi}=-\widetilde{\varphi}$, then $\widetilde{\varphi}=0$ on $\Omega^c$, and $\varphi:=\widetilde{\varphi}_{|\Omega}$ satisfies $B_0\varphi=-\varphi$.
\end{itemize}
	\end{lemma}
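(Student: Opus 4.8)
The plan is to exploit the relation $\widetilde{B}_0 = u\,(-\Delta)^{-1} v$ on $L^2(\R^3)$ versus $B_0 = u\,(-\Delta_{\sigma})$-free-kernel restricted to $\Omega$, noting that the key difference between the two operators lies entirely in the domain of integration: in $B_0$ one integrates $v(y)f(y)/|x-y|$ only over $\Omega$, while in $\widetilde{B}_0$ one integrates over all of $\R^3$. Since $\supp(V)\subseteq\overline{B}\subseteq\Omega$, both $u$ and $v$ are supported in $\overline{B}$, so the relevant multiplication operators already ``see'' only $\Omega$. The crucial structural observation is that for any $f$ supported in $\overline{B}$, the function $(-\Delta)^{-1}(vf) = \frac{1}{4\pi}\int \frac{v(y)f(y)}{|x-y|}\,dy$ and the function appearing in $B_0$ agree pointwise on $\overline{B}$, hence after multiplying by $u$ (supported in $\overline{B}$) they agree as elements of $L^2$. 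So the real content is a support-propagation argument: an eigenfunction at eigenvalue $-1$ of either operator must be supported where $u$ is supported.

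For part (i): given $\varphi\in L^2(\Omega)$ with $B_0\varphi = -\varphi$, since $B_0\varphi = u\cdot(\text{something})$ and $u$ is supported in $\overline{B}\subseteq\Omega$, the equation $-\varphi = B_0\varphi$ forces $\varphi$ to be supported in $\overline{B}$. Now extend by zero to get $\widetilde{\varphi}\in L^2(\R^3)$, also supported in $\overline{B}$. Then for $x\in\R^3$,
\[
(\widetilde{B}_0\widetilde{\varphi})(x) = \frac{u(x)}{4\pi}\int_{\R^3}\frac{v(y)\widetilde{\varphi}(y)}{|x-y|}\,dy = \frac{u(x)}{4\pi}\int_{\Omega}\frac{v(y)\varphi(y)}{|x-y|}\,dy = (B_0\varphi)(x) = -\varphi(x),
\]
where the middle equality holds because $v\widetilde{\varphi}$ is supported in $\overline{B}\subseteq\Omega$ and $\widetilde\varphi|_\Omega = \varphi$; and $(B_0\varphi)(x)$ makes sense for all $x\in\R^3$, not merely $x\in\Omega$, and equals $-\varphi(x)$, which is $-\widetilde\varphi(x)$ once we know $\varphi$ (hence $-\varphi$, hence the RHS) is supported in $\overline B$. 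Thus $\widetilde{B}_0\widetilde\varphi = -\widetilde\varphi$.

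For part (ii): given $\widetilde{\varphi}\in L^2(\R^3)$ with $\widetilde{B}_0\widetilde{\varphi} = -\widetilde{\varphi}$, the same support argument applied to $\widetilde{B}_0$ — whose range lies in functions supported in $\overline{B}$, since $u$ is — shows $\widetilde{\varphi}$ is supported in $\overline{B}\subseteq\Omega$; in particular $\widetilde{\varphi} = 0$ on $\Omega^c$. Setting $\varphi := \widetilde{\varphi}|_\Omega$, for $x\in\Omega$ we compute
\[
(B_0\varphi)(x) = \frac{u(x)}{4\pi}\int_{\Omega}\frac{v(y)\varphi(y)}{|x-y|}\,dy = \frac{u(x)}{4\pi}\int_{\R^3}\frac{v(y)\widetilde{\varphi}(y)}{|x-y|}\,dy = (\widetilde{B}_0\widetilde{\varphi})(x) = -\widetilde{\varphi}(x) = -\varphi(x),
\]
using that $v\widetilde\varphi$ is supported in $\overline B\subseteq\Omega$ to pass between the two integrals. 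Hence $B_0\varphi = -\varphi$ in $L^2(\Omega)$.

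I do not expect a serious obstacle here; the only point requiring a little care is the measure-theoretic bookkeeping of ``supported in $\overline{B}$'' versus ``vanishing a.e.\ outside $\overline{B}$'', and making sure the pointwise identities above are justified a.e.\ (the integrals are finite a.e.\ by the Hilbert–Schmidt property of $B_0$ and $\widetilde B_0$ already established, so no additional integrability input is needed). One should also remark, for later use in deducing Theorem~\ref{th:main} from Theorem~\ref{conv_tre}, that the correspondence $\varphi\leftrightarrow\widetilde\varphi$ is norm-preserving and that $\langle v,\varphi\rangle = \langle v,\widetilde\varphi\rangle$, again because $v$ is supported in $\overline B\subseteq\Omega$; this makes the spectral hypotheses and the resulting value of $\alpha$ in the two theorems literally the same. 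Finally, simplicity of the eigenvalue $-1$ transfers along the bijection, so the normalized eigenfunctions match up as well.
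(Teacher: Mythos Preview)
Your proof is correct and follows the same approach as the paper: both rely on the explicit integral representations of $B_0$ and $\widetilde{B}_0$ together with the fact that $\operatorname{supp}(u)=\operatorname{supp}(v)\subseteq\Omega$, which forces any eigenfunction at $-1$ to be supported inside $\Omega$ and makes the two operators coincide there. Your write-up simply spells out in more detail what the paper summarizes in one line, and your added remarks on norm-preservation, $\langle v,\varphi\rangle=\langle v,\widetilde\varphi\rangle$, and simplicity are useful for the application to Theorem~\ref{th:main}.
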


\begin{proof}
Both assertions readily follow from the definition \eqref{defB} of $B_0$, the identity
$$\widetilde{B}_0\widetilde{\varphi}(x)=\frac{u(x)}{4\pi}\int_{\R^3}\frac{v(y)\widetilde{\varphi}(y)}{|x-y|}dy,$$ and the fact that $\operatorname{supp}(u)=\operatorname{supp}(v)\subseteq\Omega$.
\end{proof}

Our last preparatory result provides a resolvent estimate for the approximating operator $-\Delta_{\sigma}+V$, in a topology that will be suitable to manage traces on the boundary $\partial\Omega$.

\begin{lemma}\label{int_le}
	Let $V\in L^{\frac32}(\R^3)$, with $\operatorname{supp}(V)\subseteq\Omega$. Let us fix moreover an open ball $B$ such that $\supp(V)\subsetneq\overline{B}\subseteq\Omega$, and set $\Theta:=\Omega\setminus\overline{B}$. Then
	\begin{equation}\label{est_nor}
		\|(-\Delta_{\sigma}+V+z)^{-1}\|_{\mathcal{B}(L^2(\Omega);H^2(\Theta))}\lesssim \|V\|_{L^{3/2}}.
	\end{equation}
\end{lemma}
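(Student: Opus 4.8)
The plan is to prove the estimate \eqref{est_nor} by writing the resolvent $(-\Delta_\sigma+V+z)^{-1}$ through a Birman--Schwinger type identity in terms of the free resolvent $(-\Delta_\sigma+z)^{-1}$ and then controlling the relevant terms on the region $\Theta=\Omega\setminus\overline B$, which is separated from $\supp(V)$. Concretely, using the second resolvent identity,
\begin{equation*}
(-\Delta_\sigma+V+z)^{-1}=(-\Delta_\sigma+z)^{-1}-(-\Delta_\sigma+z)^{-1}v\,[1+\mathcal{B}_z]^{-1}u\,(-\Delta_\sigma+z)^{-1},
\end{equation*}
where $u,v$ are as in \eqref{uv} and $\mathcal{B}_z:=u(-\Delta_\sigma+z)^{-1}v$ is the Birman--Schwinger operator on $L^2(\Omega)$, which (as with $B_0$) is Hilbert--Schmidt with norm $\lesssim\|V\|_{L^{3/2}}$. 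For $z$ in a fixed non-real point (or with $\RE z$ large), $1+\mathcal{B}_z$ is boundedly invertible, so the middle factor $[1+\mathcal{B}_z]^{-1}$ is bounded on $L^2(\Omega)$ with norm controlled in terms of $\|V\|_{L^{3/2}}$.

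The next step is to observe that the two "outer" factors map into the right spaces. First, $u\,(-\Delta_\sigma+z)^{-1}$ maps $L^2(\Omega)\to L^2(\Omega)$ boundedly with norm $\lesssim\|V\|_{L^{3/2}}^{1/2}$, since $v\,(-\Delta_\sigma+z)^{-1}$ is even Hilbert--Schmidt by the same Hardy--Littlewood--Sobolev/Hölder argument used for $B_0$ (the Green function $\mathcal{G}^y_z$ differs from the free one $\Gamma^y_z$ by the harmonic-type correction $h^y_z$, which is smooth inside $\Omega$ and contributes a bounded operator). Second, and this is the key geometric point: the operator $f\mapsto \big((-\Delta_\sigma+z)^{-1}(v f)\big)_{|\Theta}$ sends $L^2(\Omega)$ into $H^2(\Theta)$ with norm $\lesssim\|V\|_{L^{3/2}}^{1/2}$, because $vf$ is supported in $\overline B$, which has positive distance from $\overline\Theta$; on $\Theta$ the function $(-\Delta_\sigma+z)^{-1}(vf)$ satisfies $(-\Delta+z)w=0$ with $\sigma(w)=0$ on $\partial\Omega$, and interior-plus-boundary elliptic regularity gives $\|w\|_{H^2(\Theta)}\lesssim \|w\|_{L^2(\Omega)}\lesssim \|vf\|_{L^{6/5}(\Omega)}\lesssim \|V\|_{L^{3/2}}^{1/2}\|f\|_{L^2(\Omega)}$, using Hölder on the compact support of $v$. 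Finally, the first summand $\big((-\Delta_\sigma+z)^{-1}\big)_{|\Theta}$ as an operator $L^2(\Omega)\to H^2(\Theta)$ is bounded simply by global $H^2$-elliptic regularity for $-\Delta_\sigma$, $\|(-\Delta_\sigma+z)^{-1}g\|_{H^2(\Omega)}\lesssim\|g\|_{L^2(\Omega)}$. Combining these three bounds with the middle factor yields \eqref{est_nor}.

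I expect the main obstacle to be the elliptic regularity step \emph{up to the boundary} $\partial\Omega$: on $\Theta=\Omega\setminus\overline B$ one needs $H^2(\Theta)$ estimates for the solution $w$ of $(-\Delta+z)w=0$ with the Robin (or Dirichlet) condition on the portion of $\partial\Omega$ that touches $\overline\Theta$, while on the inner piece of $\partial\Theta=\partial B$ there is no imposed condition, so one cannot simply invoke a closed-domain estimate. The right way is to cover $\overline\Theta$ by finitely many balls, using purely interior Calderón--Zygmund/Schauder estimates away from $\partial\Omega$ and boundary $H^2$-regularity for the Robin problem (available since $\partial\Omega\in\mathcal C^2$ and $b\in\mathcal C^1$) near $\partial\Omega$, and to absorb lower-order terms by the already-established $L^2$ bound $\|w\|_{L^2(\Omega)}=\|(-\Delta_\sigma+z)^{-1}(vf)\|_{L^2}\lesssim\|V\|_{L^{3/2}}^{1/2}\|f\|_{L^2}$. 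A secondary technical point is making the dependence of all implicit constants on $\|V\|_{L^{3/2}}$ (rather than on finer features of $V$) transparent; this is guaranteed because everywhere $V$ enters only through the $L^{3/2}$-bounded (indeed Hilbert--Schmidt) operators $\mathcal{B}_z$, $v(-\Delta_\sigma+z)^{-1}$ and through $\|v\|_{L^3}=\|V\|_{L^{3/2}}^{1/2}$ via Hölder on $\supp(V)$.
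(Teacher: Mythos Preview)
Your proposal is correct and follows essentially the same route as the paper: both proofs write the perturbed resolvent via the Konno--Kuroda/Birman--Schwinger identity
\[
(-\Delta_\sigma+V+z)^{-1}=(-\Delta_\sigma+z)^{-1}-(-\Delta_\sigma+z)^{-1}v\,[1+u(-\Delta_\sigma+z)^{-1}v]^{-1}u(-\Delta_\sigma+z)^{-1},
\]
and then bound the three factors $u(-\Delta_\sigma+z)^{-1}:L^2\to L^2$, $[1+\mathcal B_z]^{-1}:L^2\to L^2$, and $(-\Delta_\sigma+z)^{-1}v:L^2\to H^2(\Theta)$ separately. The only noteworthy difference is in the last of these: the paper argues directly at the kernel level, using the decomposition $\mathcal G^y_z=\Gamma^y_z+h^y_z$ and the uniform bounds $\|\Gamma^y_z\|_{H^2(\Theta)}\lesssim 1$, $\|h^y_z\|_{H^2(\Omega)}\lesssim 1$ for $y\in\supp(V)$, whereas you invoke interior/boundary elliptic regularity for the homogeneous equation $(-\Delta+z)w=0$ on $\Theta$. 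Both arguments are valid and exploit the same geometric fact (positive distance between $\supp(V)$ and $\Theta$); the paper's kernel approach is slightly more self-contained and avoids the covering/patching near $\partial B$ that you correctly flag as a technical point. One small slip: you write that ``$vf$ is supported in $\overline B$, which has positive distance from $\overline\Theta$'', but $\overline B$ and $\overline\Theta$ share $\partial B$; what you need (and what the hypothesis $\supp(V)\subsetneq\overline B$ is meant to ensure, up to slightly enlarging $B$) is that $\supp(V)$ itself sits at positive distance from $\Theta$.
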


\begin{proof}
We start by showing that $u(-\Delta_{\sigma}+z)^{-1}v$ is an Hilbert-Schdmit operator (whence bounded) on $L^2(\Omega)$, with norm controlled by $\|V\|_{L^{3/2}}$. Indeed
$$\|u(-\Delta_{\sigma}+z)^{-1}v\|^2_{\mathrm{H.S.}}\lesssim\int_{\Omega\times\Omega}|V(x)|\big(|\Gamma_z(x,y)|^2+|h_{z}^y(x)|^2\big)|V(y)|dxdy,$$
and using Hardy-Littlewood-Sobolev and H\"older inequality we get
\begin{equation}\label{ua}
	\int_{\Omega\times\Omega}|V(x)||\Gamma_z(x,y)|^2|V(y)|dxdy\lesssim\|V\|^2_{L^{\frac32}}.
\end{equation}
Observe moreover that, since $\Gamma_z$ is smooth outside $x=0$, for every $y\in\supp(V)$ we have the bound $\|\sigma(\Gamma^y_z)\|_{L^2(\partial\Theta)}\leqslant\delta$, for some positive constant $\delta$ independent on $y$. In view of the definition \eqref{def_dist} of $h_z^y$, we then have
\begin{equation}\label{est_hh}
	\|h_z^y\|_{H^2_x(\Omega)}\lesssim 1.
\end{equation}
In particular, $|h_z^y(x)|$ is uniformly bounded for $x,y\in \Omega$, 
which implies
\begin{equation}\label{pu}
	\int_{\Omega\times\Omega}|V(x)||h_z^y(x)|^2|V(y)|dxdy\lesssim\|V\|^2_{L^1(\Omega)}\lesssim\|V\|^2_{L^{3/2}(\Omega)}.
\end{equation}
Combining estimates \eqref{ua} and \eqref{pu} we eventually obtain
\begin{equation}\label{due}
\|u(-\Delta_{\sigma}+z)^{-1}v\|^2_{\mathrm{H.S.}}\lesssim \|V\|_{L^{3/2}}.
\end{equation}
Estimate \eqref{due} and the condition $|V|^{\frac12}\lesssim(-\Delta)^{\frac12}$ (in the sense of infinitesimally bounded operators), which can be proved as in \cite[Lemma 4.1(ii)]{Mic-Scan}, guarantee the validity of the Konno-Kuroda identity \cite[Theorem B.1(b)]{albeverio-solvable}
\begin{align*}
	&(-\Delta_{\sigma}+V+z)^{-1}=(-\Delta_{\sigma}+z)^{-1}\\
	&-(-\Delta_{\sigma}+z)^{-1}v\big(1+u(-\Delta_{\sigma}+z)^{-1}v\big)^{-1}u(-\Delta_{\sigma}+z)^{-1}.
\end{align*}
Moreover we have the following bounds:
	\begin{gather}
		\label{uno}\|u(-\Delta_{\sigma}+z)^{-1}\|_{\mathcal{B}(L^2(\Omega);L^2(\Omega))}\lesssim\|u\|_{L^2(\Omega)},\\
	\label{tre}\|(-\Delta_{\sigma}+z)^{-1}v\|_{\mathcal{B}(L^2(\Omega);H^2(\Theta))}\lesssim\|v\|_{L^2(\Omega)}.
	\end{gather}
	The bound \eqref{uno} follows from the Sobolev embedding $H^2(\Omega)\hookrightarrow L^{\infty}(\Omega)$. Moreover, using \eqref{est_hh} and the estimate $\|\Gamma^y_z\|_{H^2(\Theta)}\lesssim 1$ for $y\in\supp(V)$, we obtain
	\begin{align*}
		\|(-\Delta_{\sigma}&+z)^{-1}vf\|_{H^2(\Theta)}=\Big\|\int_{\Omega}\mathcal{G}^y_z\,v(y)f(y)dy\Big\|_{H^2(\Theta)}\\
		&\lesssim\int_{\Omega}\big\|\Gamma^y_z+h_{z}^y\big\|_{H^2(\Theta)}v(y)f(y)dy\lesssim \|v\|_{L^2(\Omega)}\|f\|_{L^2(\Omega)},
	\end{align*}
	which proves the bound \eqref{tre}. Combining \eqref{uno}, \eqref{due} and \eqref{tre}, we eventually deduce estimate \eqref{est_nor}.
\end{proof}

We are now ready to prove our main result. Given a function $\widetilde{\psi}\in H^2(\R^3)$, with a slight abuse of notation we will write $\sigma(\widetilde{\psi}):=\sigma(\widetilde{\psi}_{|\Omega})$, where $\sigma:H^2(\Omega)\to L^2(\partial\Omega)$ is the continuous map defined by \eqref{eq:def_sigma} (respectively by \eqref{eq:def_sigmaD} in the case of a Dirichlet boundary condition on $\partial\Omega$).

\begin{proof}[Proof of Theorem \ref{th:main}]
Fix $z\in\C\setminus\R$ and $f\in L^2(\Omega)$, with $\|f\|_{L^2(\Omega)}=1$. We set $$\psi:=(-\Delta_{\alpha,\sigma}+z)^{-1}f,\qquad \widetilde{f}:=(-\Delta_{\alpha}+z)\widetilde{\psi},$$ 
where $\widetilde{\psi}\in\mathcal{D}(-\Delta_{\alpha})$ is the extension of $\psi$ provided by Lemma \ref{le:ext}, so that $$\widetilde{f}_{|\Omega}=f,\qquad  \|\widetilde{f}\|_{L^2(\R^3)}\lesssim 1.$$
From now on, to ease the notation, we write $o(1)$ to denote a quantity \emph{independent} on $f$ and infinitesimal as $\eps\to 0$. Our aim is to show that 
\begin{equation}\label{eq:conv_res_omega}
\|(-\Delta_{\sigma}+V_{\eps}+z)^{-1}f-\psi\|_{L^2(\Omega)}=o(1).
\end{equation}
We proceed as follows. Let $\widetilde{f}_{\eps}\in L^2(\R^3)$ be such that
\begin{equation}\label{corr:V}
	(-\Delta_{\sigma}+V_{\eps}+z)^{-1}f=\big((-\Delta+V_{\eps}+z)^{-1}\widetilde{f}_{\eps}\,\big)_{|\Omega}.
\end{equation}
We point out that, in general, $\widetilde{f}_{\eps}\neq\widetilde{f}$, in view of the correction due to the boundary condition on $\partial\Omega$. Nevertheless, we are going to show that $\|\widetilde{f}_{\eps}-\widetilde{f}\|_{L^2(\R^ 3)}=o(1)$. With this information, we can conclude the proof as follows. Owing to Lemma \ref{le:local-res}, the operator $-\Delta+V$ on $L^2(\R^3) $ satisfies the assumptions of Theorem \ref{conv_tre} (with $\widetilde{\varphi}$ be the extension by zero of $\varphi$ outside $\Omega$), whose norm-convergence result in point (i) then gives
\begin{equation}\label{eq:conv_res_full} 
\|(-\Delta+V_{\eps}+z)^{-1}\widetilde{f}_{\eps}-\widetilde{\psi}\|_{L^2(\R^3)}=o(1).
\end{equation}
Restricting \eqref{eq:conv_res_full} to $L^2(\Omega)$, and using \eqref{corr:V}, we eventually deduce \eqref{eq:conv_res_omega}.

We construct now $\widetilde{f}_{\eps}$, showing indeed that $\|\widetilde{f}_{\eps}-\widetilde{f}\|_{L^2(\R^3)}=o(1)$. Let us fix an open ball $B$ such that $\supp(V)\subsetneq\overline{B}\subseteq\Omega$, and set $\Theta:=\Omega\setminus\overline{B}$. By point (ii) of Theorem \ref{conv_tre}, we have
\begin{equation}\label{eq:conv_bound_theta}\|(-\Delta + V_{\eps}+z)^{-1}\widetilde{f}-\widetilde{\psi}\|_{H^2(\Theta)}=o(1).
\end{equation}
Let $\mathcal{X}_{\sigma}=\sigma(H^2(\Omega))$, that is $H^{3/2}(\partial\Omega)$ when $b=\infty$, $H^{1/2}(\partial\Omega)$ otherwise. Set also
$$\phi_{\eps}:=\sigma\big((-\Delta + V_{\eps}+z)^{-1}\widetilde{f}\,\big).$$
Since by construction $\psi\in\mathcal{D}(-\Delta_{\alpha,\sigma})$, we have $\sigma(\widetilde{\psi})=0$. We then deduce from \eqref{eq:conv_bound_theta} that $\|\phi_{\eps}\|_{\mathcal{X}_{\sigma}}=o(1)$. Next, let us set $\widetilde{f}_{\eps}:=\widetilde{f}+\widetilde{r}_{\eps}$, where $\widetilde{r}_{\eps}$ solves
\begin{equation}\label{eq:sire}
\begin{cases}
\sigma((-\Delta+V_{\eps}+z)^{-1}\widetilde{r}_{\eps})=-\phi_{\eps},\\
\widetilde{r}_{\eps}(x)=0\quad x\in\Omega.
\end{cases}
\end{equation}
Observe that $\widetilde{f}_{\eps}$ satisfies identity \eqref{corr:V}. We are left to show that we can construct $\widetilde{r}_{\eps}$ so that $\|r_{\eps}\|_{L^2}=o(1)$. Since $\sigma$ is surjective onto $\mathcal{X}_{\sigma}$, by the open mapping theorem there exists $h_{\eps}\in H^2(\Omega)$ with $\sigma(h_{\eps})=-\phi_{\eps}$, $h_{\eps|\supp(V)}=0$, and $$\|h_{\eps}\|_{H^2(\Omega)}\lesssim \|\phi_{\eps}\|_{L^2(\partial\Omega)}=o(1).$$
Let us define the functions
$$\ell_{\eps}:=(-\Delta+V_{\eps}+z)h_{\eps}=(-\Delta+z)h_{\eps},\qquad k_{\eps}:=(-\Delta_{\sigma}+V_{\eps}+z)^{-1}\ell_{\eps}.$$
Using Lemma \ref{int_le} and the uniform bound $\|V_{\eps}\|_{L^{\frac32}}\lesssim\|V\|_{L^{\frac32}}$ we obtain
\begin{equation}\label{est_norg}
\|(-\Delta_{\sigma}+V_{\eps}+z)^{-1}\|_{\mathcal{B}(L^2(\Omega);H^2(\Theta))}\lesssim 1.
\end{equation}
Then we get
\begin{equation}\label{est_k}
\|k_{\eps}\|_{H^2(\Theta)}\lesssim\|\ell_{\eps}\|_{L^2(\Omega)}\lesssim\|h_{\eps}\|_{H^2(\Omega)}=o(1).
\end{equation}
Now, let us set $w_{\eps}:=h_{\eps}-k_{\eps}$, and observe that \eqref{est_k} yields $\|w_{\eps}\|_{H^2(\Theta)}=o(1)$. Moreover, $w_{\eps}$ satisfies
$$\begin{cases}
(-\Delta+V_{\eps}+z)w_{\eps}=0&x\in\Omega\\
\sigma(w_{\eps})=-\phi_{\eps}.
\end{cases}$$
Let $\widetilde{w}_{\eps}\in H^2(\R^3)$ be an extension of $w_{\eps}$, with $\|\widetilde{w}_{\eps}\|_{H^2(\Omega^c)}\lesssim\|w_{\eps}\|_{H^2(\Theta)}$, and set $\widetilde{r}_{\eps}:=(-\Delta+V_{\eps}+z)\widetilde{w}_{\eps}$. By construction $\widetilde{r}_{\eps}$ satisfies \eqref{eq:sire}, and 
\begin{equation}\label{r.conv}
\|\widetilde{r}_{\eps}\|_{L^2(\R^3)}=\|\widetilde{r}_{\eps}\|_{L^2(\Omega^c)}\lesssim\|\widetilde{w}_{\eps}\|_{H^2(\Omega^c)}\lesssim\|w_{\eps}\|_{H^2(\Theta)}=o(1),
\end{equation}
as desired. The proof is complete.
\end{proof}

\section{Further comments and developments}\label{sec:res}

\subsection{Different concepts of resonance} With reference to the notation of Section \ref{se3}, we recall the following definition.

\begin{definition}\label{def_res_1}
We say that $-\Delta+V$ has a zero-energy resonance if there exists $\widetilde{\varphi}\in L^2(\R^3)$, with $\langle v,\widetilde{\varphi}\rangle\neq 0$, such that $\widetilde{B}_0\widetilde{\varphi}=-\widetilde{\varphi}$.
\end{definition}

Theorem 1.2.5 in \cite{albeverio-solvable} shows that $-\Delta+V_{\eps}$ converges as $\eps\to 0$, in norm resolvent sense, to $-\Delta_{\alpha}$ for some $\alpha\neq\infty$ (i.e.~to a non trivial point interaction) if and only if $-\Delta+V$ has a zero-energy resonance according to the previous definition. 

\begin{remark}
Potentials satisfying the zero-energy resonance condition do actually exist, see e.g.~\cite[Proposition 7.1]{Mic-Scan}, even though this occurrence is, in a sense, ``non-generic". Given indeed $V\in L^{3/2}(\R^3)$, consider the family of potentials $\{\theta V\}_{\theta\in\R}$: since $\widetilde{B}_0$ is a compact operator on $L^2(\R^3)$, then $-1$ is an eigenvalue of $\theta\widetilde{B}_0$ for at most a countable set of coupling parameters $\theta$.
\end{remark}

A different notion of zero-energy resonance, widely used in the PDE's context, is the following.

\begin{definition}\label{def_res_2}
	We say that $-\Delta+V$ has a zero-energy resonance if there exists $\widetilde{\psi}\in L^2_{-1-\delta}(\R^3)\setminus L^2(\R^3)$, for any $\delta>0$, such that $(-\Delta+V)\widetilde{\psi}=0$ as a distributional identity on $\R^3$, where $L^2_{s}(\R^3)$, $s\in\R$, denote the weighted $L^2$-space $L^2(\R^3,\langle x\rangle^{s}dx).$
\end{definition}

The equivalence between Definitions \ref{def_res_1} and \ref{def_res_2} is encoded by the following result, which is an instance of the celebrated Birman-Schwinger principle. \cite{Birman,Sch}. 
\begin{lemma}\label{le:birs}
	Let $V\in L^{\frac32}(\R^3)$ be a compactly supported potential.
	\begin{itemize}
		\item[(i)] Let $\widetilde{\varphi}\in L^2(\R^3)$ be such that $B_0\widetilde{\varphi}=-\widetilde{\varphi}$. Then the function \begin{equation}\label{rel}\widetilde{\psi}=(-\Delta)^{-1}v\widetilde{\varphi}\in L_{-1-\delta}^2(\R^3),\;\delta>0,\end{equation}
		satisfies $(-\Delta+V)\widetilde{\psi}=0$.
		\item[(ii)] Conversely, if $\widetilde{\psi}\in L_{-1-\delta}^2(\R^3)$ for any $\delta>0$, solves $(-\Delta+V)\widetilde{\psi}=0$, then the function
		$$\widetilde{\varphi}=u\widetilde{\psi}\in L^2(\R^3)$$
		satisfies $\widetilde{B}_0\widetilde{\varphi}=-\widetilde{\varphi}$.
	\end{itemize}
	In both cases, $\langle v,\widetilde{\varphi}\rangle\neq 0$ if and only if $\widetilde{\psi}\not\in L^2(\R^3)$.
\end{lemma}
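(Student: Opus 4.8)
The plan is to establish both implications by a direct computation based on the factorization $V=uv$ and on the mapping properties of the Newtonian potential $(-\Delta)^{-1}$, and then to read off the dichotomy for $\langle v,\widetilde{\varphi}\rangle$ from the leading-order decay of the Newtonian potential of a compactly supported function. Throughout I use the elementary facts that, since $V\in L^{3/2}(\R^3)$ is compactly supported, $u$ and $v$ belong to $L^3(\R^3)$ with compact support (hence to every $L^p$ with $p\leqslant 3$), that $vg\in L^{6/5}(\R^3)\cap L^1(\R^3)$ whenever $g\in L^2(\R^3)$, and that $(-\Delta)^{-1}$ maps $L^{6/5}(\R^3)$ into $L^6(\R^3)$ by the Hardy--Littlewood--Sobolev inequality. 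For part (i), I set $\widetilde{\psi}:=(-\Delta)^{-1}v\widetilde{\varphi}$. Then $v\widetilde{\varphi}\in L^{6/5}(\R^3)\cap L^1(\R^3)$ is compactly supported, so $\widetilde{\psi}\in L^6(\R^3)$ and $|\widetilde{\psi}(x)|\lesssim|x|^{-1}$ for $|x|$ large; together with $\widetilde{\psi}\in L^6_{\mathrm{loc}}$ this gives $\widetilde{\psi}\in L^2_{-1-\delta}(\R^3)$ for every $\delta>0$. Moreover $-\Delta\widetilde{\psi}=v\widetilde{\varphi}$ in $\mathcal{D}'(\R^3)$, whereas $u\widetilde{\psi}=u(-\Delta)^{-1}v\widetilde{\varphi}=\widetilde{B}_0\widetilde{\varphi}=-\widetilde{\varphi}$, so that, using $V=uv$, $V\widetilde{\psi}=v(u\widetilde{\psi})=-v\widetilde{\varphi}$; adding the two identities gives $(-\Delta+V)\widetilde{\psi}=0$.

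For part (ii), I start from $\widetilde{\psi}\in L^2_{-1-\delta}(\R^3)$ for all $\delta>0$ with $(-\Delta+V)\widetilde{\psi}=0$ in $\mathcal{D}'(\R^3)$, which already forces $V\widetilde{\psi}\in L^1_{\mathrm{loc}}(\R^3)$. The first step is a regularity upgrade: by a Brezis--Kato-type bootstrap (needed precisely because $V\in L^{3/2}$ has critical integrability, so a plain H\"older iteration gains nothing) one obtains $\widetilde{\psi}\in L^p_{\mathrm{loc}}(\R^3)$ for every $p<\infty$; in particular $\widetilde{\psi}\in L^6_{\mathrm{loc}}$, hence $V\widetilde{\psi}\in L^{6/5}(\R^3)$ is compactly supported and $\widetilde{\varphi}:=u\widetilde{\psi}\in L^2(\R^3)$. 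Then $\chi:=(-\Delta)^{-1}(V\widetilde{\psi})\in L^6(\R^3)$ is well defined, with $|\chi(x)|\lesssim|x|^{-1}$ at infinity, so $\chi\in L^2_{-1-\delta}(\R^3)$ for all $\delta>0$, and $\widetilde{\psi}+\chi$ is harmonic on $\R^3$ and lies in $L^2_{-1-\delta}(\R^3)\subset\mathcal{S}'(\R^3)$. Since a tempered harmonic function is a polynomial, and a non-zero polynomial does not belong to $L^2_{-1-\delta}(\R^3)$ for, say, $\delta=1$, we conclude $\widetilde{\psi}=-\chi=-(-\Delta)^{-1}(V\widetilde{\psi})$. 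Writing $V\widetilde{\psi}=v(u\widetilde{\psi})=v\widetilde{\varphi}$, this reads $\widetilde{\psi}=-(-\Delta)^{-1}v\widetilde{\varphi}$; applying $u$ gives $\widetilde{\varphi}=u\widetilde{\psi}=-u(-\Delta)^{-1}v\widetilde{\varphi}=-\widetilde{B}_0\widetilde{\varphi}$, as claimed.

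For the final equivalence, in both cases $\widetilde{\psi}=\pm(-\Delta)^{-1}(v\widetilde{\varphi})$ with $v\widetilde{\varphi}\in L^1(\R^3)$ compactly supported in a ball $B_R$, and since $\big||x-y|^{-1}-|x|^{-1}\big|\lesssim|y|\,|x|^{-2}$ for $y\in B_R$, for $|x|>2R$ one has
\[
\widetilde{\psi}(x)=\pm\frac{1}{4\pi|x|}\int_{\R^3}v\widetilde{\varphi}\,dy+O(|x|^{-2}).
\]
If $\int v\widetilde{\varphi}\neq0$ the leading term decays exactly like $|x|^{-1}$, which is not square-integrable near infinity in $\R^3$, so $\widetilde{\psi}\notin L^2(\R^3)$; if $\int v\widetilde{\varphi}=0$ the same expansion yields $|\widetilde{\psi}(x)|\lesssim|x|^{-2}$, which combined with $\widetilde{\psi}\in L^6_{\mathrm{loc}}$ gives $\widetilde{\psi}\in L^2(\R^3)$. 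As $v$ is real-valued, $\int v\widetilde{\varphi}\neq0$ is equivalent to $\langle v,\widetilde{\varphi}\rangle\neq0$, which completes the argument.

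The step I expect to be the main obstacle is the regularity upgrade in part (ii): promoting a merely $L^2_{\mathrm{loc}}$ distributional solution of $(-\Delta+V)\widetilde{\psi}=0$, with $V$ only of critical integrability $L^{3/2}$, to $\widetilde{\psi}\in L^6_{\mathrm{loc}}$ (equivalently $u\widetilde{\psi}\in L^2$) cannot be achieved by a naive H\"older iteration and requires the Brezis--Kato truncation argument. A secondary, more routine point is the vanishing of the harmonic remainder $\widetilde{\psi}+\chi$, which rests on the temperedness of the weighted space $L^2_{-1-\delta}(\R^3)$ and the Liouville property for tempered harmonic functions. Everything else reduces to bookkeeping with H\"older and Hardy--Littlewood--Sobolev estimates applied to the factorization $V=uv$.
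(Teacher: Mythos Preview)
The paper does not prove this lemma; immediately after the statement it simply refers to \cite[Lemma~1.2.3]{albeverio-solvable} and \cite[Proposition~4.3]{DEPA_arx} for the argument. Your proof is correct and is precisely the standard Birman--Schwinger computation one finds in those references: the factorization $V=uv$, the Hardy--Littlewood--Sobolev mapping properties of $(-\Delta)^{-1}$, the Liouville argument for the tempered harmonic remainder, and the multipole expansion yielding the $L^2$/non-$L^2$ dichotomy are exactly the expected ingredients.

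The only substantive remark concerns the step you already flag. Under the paper's minimal hypothesis $V\in L^{3/2}$, the product $V\widetilde\psi$ with $\widetilde\psi\in L^2_{\mathrm{loc}}$ lies a~priori only in $L^{6/7}_{\mathrm{loc}}$, so the assertion that the distributional identity ``already forces $V\widetilde\psi\in L^1_{\mathrm{loc}}$'' is really an implicit part of the hypothesis in Definition~\ref{def_res_2} rather than a consequence; moreover, the classical Brezis--Kato theorem is stated for $H^1_{\mathrm{loc}}$ solutions, not merely $L^2_{\mathrm{loc}}$ ones. In the cited references this issue does not arise because stronger assumptions on $V$ are in force there, making $u\widetilde\psi\in L^2$ immediate by H\"older. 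Under the present hypotheses the step can still be closed, for instance by running the truncation argument directly on the integral equation $\widetilde\psi=-(-\Delta)^{-1}(V\widetilde\psi)$ that you derive anyway. This is a refinement of a point you already isolated, not a gap in your overall strategy.
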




Versions of the above result can be found e.g.~in \cite[Lemma 1.2.3]{albeverio-solvable} and in \cite[Proposition 4.3]{DEPA_arx}. The latter reference contains also a comprehensive discussion, including other equivalent notions of zero-energy resonance.

\begin{remark} According to definition \ref{def_res_2}, $\widetilde{\psi}$ formally satisfies the eigenvalue equation with zero eigenvalue, but $\psi\notin L^2$ and it is not related to the continuous spectrum. More precisely, the so called \emph{resonance function} $\widetilde{\psi}$ decays at infinity exactly as the Green function $\Gamma_0\in L^2_{-1-\delta}(\R^3)$ of the free Laplacian. We point out (see e.g.~the seminal paper \cite{J-K} by Jensen and Kato) that the resolvent $(-\Delta+V-z^2)^{-1}$, $z\in\C^+$, considered as a map from $L^2_{2+\sigma}$ to $L^2_{-2-\sigma}$, $\sigma>0$, has a continuous extension to the real line, except for a possible singularity at $z=0$, which corresponds to the occurrence of a zero-energy obstruction (eigenvalue or resonance). An analogous result holds true also for Schr\"odinger operators with point interactions \cite{scan_exp,Mic-Scan-reso}. Finally, we mention that a similar idea can be also extended to the context of non-selfadjoint operators, leading to the notion of virtual levels (see e.g.~\cite{BC} and references therein).
\end{remark}
We stress that this second notion of resonance, as such,  is clearly void in the case of bounded sets. As shown in Lemma \ref{le:local-res}, the notion of zero-energy resonance provided by Definition \ref{def_res_1} naturally ``localizes" from $\R^3$ to $\Omega$, as the non-local operator $(-\Delta)^{-1}$ is sandwiched between the compactly supported functions $u,v$. Starting from this, and using an extension-restriction approach, we have shown in Theorem \ref{th:main} the approximation procedure of $-\Delta_{\sigma,\alpha}$  by means of local Schr\"odinger operators on $L^2(\Omega)$. Still, it is unclear whether there is a localized version of Definition \ref{def_res_2}. Note indeed that, although $V$ is compactly supported, $\widetilde{\psi}$ is not in general, and we do not have a canonical way to describe its behavior on $\partial\Omega$, and to relate it to the eigenfunction $\varphi$ of $B_0$ as identity \eqref{rel} does in the whole space case. The possibility of replacing Definition \ref{def_res_2} of zero-energy resonance for the Schr\"odinger operator $-\Delta_{\sigma}+V$ with a substitute adapted to the presence of boundaries remains a question to investigate. 

\subsection{Non-local approximation.}
So far, we have considered the approximation of a point interaction on $\R^3$ by means of \emph{local} operators, focusing on the spectral conditions on the potential $V$ needed to ensure the emergence of a non-trivial boundary condition in the limit.

An alternative approximation scheme is based on \emph{non-local} operators. More precisely, we consider the following rank-one perturbation of the free Laplacian
$$H^\eps=-\Delta + a(\eps)(\rho^\eps,\cdot) \rho^\eps,$$
for some function (say, smooth and compactly supported) $\rho$ with $\int_{\R^3} \rho(x)\ dx =1$, so that  $\rho^\eps(x)=\frac{1}{\eps^3}\, \rho \! \left(\frac{x}{\eps} \right)$ is a sequence approximating the $\delta$ distribution on $\R^3$, and $a(\eps)$ is at this stage unprejudiced. $H^\eps$ is a family of self-adjoint operators on $L^2(\R^3)$ and a direct analysis shows that $H^\eps$ converges (in the norm resolvent sense as $\eps\to 0$) to $-\Delta_\alpha$ if and only if 
\begin{equation}\label{aeps}
	a(\eps)=-\dfrac{\eps}{\ell} + \dfrac{\alpha\eps^2}{\ell^2}+o(\eps^2),
	\end{equation}
where $\ell=(\rho,(-\Delta)^{-1}\rho) $ is the electrostatic energy of $\rho$, while any other scaling makes the family convergent to the free Laplacian (see \cite{NP1, NP2} for a detailed analysis with applications to the classical Pauli-Fierz model in electrodynamics, \cite{MP24} for a related scalar model in acoustics, and \cite{CFNT17} for a nonlinear Schr\"odinger model).
Now, also in this non-local approximation of $\Delta_\alpha$, a resonance function appears, given by the function $\psi=\Gamma_0*\rho\in L^2_{-1-\delta}(\R^3)\setminus L^2(\R^3)$, solving the equation $-\Delta\psi   -
\frac{1}{\ell}  \rho  \langle\rho,\psi\rangle=0$. Notice however that, within the non-local approximation scheme, the zero energy resonance always exists, and so there is no need of any spectral condition.  On the other hand, it is unclear if resonances in the sense of Definition \ref{def_res_1} make sense in this approximation scheme.\\
Let us turn now our attention to the case of a bounded domain $\Omega\subseteq\R^3$. 
Concerning the approximation through non-local operators, such mechanism also persists to the bounded case. Indeed, considering the self-adjoint operator 
$$H_{\sigma}^{\eps}=-\Delta_{\sigma}+a(\eps)(\rho^\eps,\cdot) \rho^\eps,$$
on $L^2(\Omega)$, with $\sigma$ the homogeneous Robin or Laplace boundary condition on $\partial\Omega$ and $\rho$ as before (and such that $\operatorname{supp}(\rho)\subseteq \Omega$), a direct inspection shows that $H_{\sigma}^{\eps}$ converges (in the norm resolvent sense as $\eps\to 0$) to $-\Delta_{\alpha,\sigma}$ with $\alpha\in\R$ (i.e.~a non trivial point interaction), if and only if $a(\eps)$ has the form \eqref{aeps}. For the sake of completeness, let us outline the argument here.

We start with the resolvent identity
\begin{equation*}(H^{\eps}_{\sigma}+z)^{-1}=(-\Delta_{\sigma}+z)^{-1}-\frac{\langle(-\Delta_{\sigma}+\overline{z})^{-1}\rho^{\eps},\cdot\rangle(-\Delta_{\sigma}+z)^{-1}\rho^{\eps}}{\frac{1}{a(\eps)}+\langle \rho^{\eps},(-\Delta_{\sigma}+z)^{-1}\rho^{\eps}\rangle},
\end{equation*}
valid for $z\in\C\setminus\R$. Since $\rho^\eps\to\delta$, we deduce
$$\langle(-\Delta_{\sigma}+\overline{z})^{-1}\rho^{\eps},\cdot\rangle(-\Delta_{\sigma}+z)^{-1}\rho^{\eps}\to |\mathcal{G}_z\rangle\langle \overline{\mathcal{G}_z}|$$
in the norm resolvent sense as $\eps\to 0$.
Recalling identities \eqref{res_form_bounded} and  \eqref{czalpha}, we are left to show that
\begin{equation}\label{cpeho}-\dfrac{1}{a(\eps)}-\langle\rho^{\eps},(-\Delta_{\sigma}+z)^{-1}\rho^{\eps}\rangle\to \alpha+\frac{\sqrt{z}}{4\pi}-h_z(0).
	\end{equation}
In view of the expression \eqref{green_b} for the Green function of $-\Delta_{\sigma}$, we get
\begin{equation*}
	\begin{split}
\langle\rho^{\eps},(-\Delta_{\sigma}+z)^{-1}\rho^{\eps}\rangle&=\int_{\R^3}\int_{\R^3}\big(\Gamma_z(x,y)+h_z^y(x)\big)\rho^{\eps}(x)\rho^{\eps}(y)dydx\\
&=\frac{\ell}{\eps}-\frac{\sqrt{z}}{4\pi}+h_z(0)+o(1),
	\end{split}
\end{equation*}
which implies \eqref{cpeho} (precisely when the coupling constant $a(\eps)$ has the form \eqref{aeps}), concluding the argument.\\ 
Let us notice that the non-local approximation scheme does not need any spectral conditions.

\medskip

\subsection{Unbounded domains.}
Finally, let us consider the case of an open, \emph{unbounded} domain $\Omega$ different from the whole $\R^3$. Analogously to the case of bounded domains, the approximation of a point interaction via local operators $-\Delta_{\sigma}+V_{\eps}$ can be obtained by the extension-restriction method, as soon as the potential $V$ is compactly supported and the boundary condition $\sigma$ on $\partial\Omega$ guarantees the self-adjointeness of $-\Delta_{\sigma}$, the regularity of its Green function, and a well-definition on Sobolev spaces of the trace/extension maps. Consider for example exterior domains, i.e.~$\Omega$ is the complement of a compact domain with regular boundary. In this case the above requirements are satisfied (see for example \cite{Leis86, Mochizuki17}), so that a statement analogous to Theorem \ref{th:main} holds and its proof applies {\em verbatim}. Apart from the special case of exterior domains, it would be interesting to consider the approximation procedure in the light of the general classification of unbounded domains, i.e.~quasi-conical, quasi-cylindrical and quasi-bounded domains (see \cite{Glazman65, EE18}). Moreover, still under suitable assumptions on $\sigma$, also the approximation scheme via non-local operators can be verified by a direct analysis of the resolvent. An interesting open question, in this context, concerns the possibility (both for local and non-local approximations) of a notion of zero-energy resonance as generalized eigenfunction, exhibiting a non-$L^2$ decay in the spatial directions at infinity.

\section*{Acknowledgments.}
\noindent The first author acknowledges the support of the Next Generation EU - Prin 2022 project "Singular Interactions and Effective Models in Mathematical Physics- 2022CHELC7". The second author acknowledges the support by INdAM-GNAMPA through the project ``Local and nonlocal equations with lower order terms".

\end{document}